\newcolumntype{L}[1]{>{\raggedright\let\newline\\\arraybackslash\hspace{0pt}}m{#1}}
\newcolumntype{C}[1]{>{\centering\let\newline\\\arraybackslash\hspace{0pt}}m{#1}}
\newcolumntype{R}[1]{>{\raggedleft\let\newline\\\arraybackslash\hspace{0pt}}m{#1}}
\DeclareMathOperator*{\argmax}{arg\,max}
\newcommand{\EE}{\mathbb{E}}
\newcommand{\RR}{\mathbb{R}}
\begin{document}

\title{Algorithmic decision making and the cost of fairness}

\author{Sam Corbett-Davies}
\affiliation{%
    \institution{Stanford University}
}
\email{scorbett@stanford.edu}

\author{Emma Pierson}
\affiliation{%
    \institution{Stanford University}
}
\email{emmap1@stanford.edu}

\author{Avi Feller}
\affiliation{%
    \institution{Univ. of California, Berkeley}
}
\email{afeller@berkeley.edu}

\author{Sharad Goel}
\affiliation{%
    \institution{Stanford University}
}
\email{scgoel@stanford.edu}

\author{Aziz Huq}
\affiliation{%
    \institution{University of Chicago}
}
\email{huq@uchicago.edu}

\begin{abstract}
Algorithms are now regularly used to decide whether defendants awaiting trial are too dangerous to be released back into the community. In some cases, black defendants are substantially more likely than white defendants to be incorrectly classified as high risk. 
To mitigate such disparities,
several techniques have recently been proposed to achieve
\emph{algorithmic fairness}. 
Here we reformulate algorithmic fairness as constrained optimization:
the objective is to maximize public safety 
while satisfying formal fairness constraints designed to reduce racial disparities.
We show that for several past definitions of fairness,
the optimal algorithms that result require 
detaining defendants above race-specific risk thresholds.
We further show that the optimal \emph{unconstrained} algorithm 
requires applying a single, uniform threshold to all defendants.
The unconstrained algorithm thus maximizes public safety while also satisfying one important understanding of equality: that all individuals are held to the same standard, irrespective of race.
Because the optimal constrained and unconstrained algorithms generally differ,
there is tension between improving public safety and satisfying prevailing notions of algorithmic fairness.
By examining data from Broward County, Florida,
we show that this trade-off can be large in practice.
We focus on algorithms for pretrial release decisions, but the principles we discuss apply to other domains, and also to human decision makers carrying out structured decision rules.
\end{abstract}

%
%
\begin{CCSXML}
<ccs2012>
<concept>
<concept_id>10003456.10003462</concept_id>
<concept_desc>Social and professional topics~Computing / technology policy</concept_desc>
<concept_significance>500</concept_significance>
</concept>
<concept>
<concept_id>10010405.10010455</concept_id>
<concept_desc>Applied computing~Law, social and behavioral sciences</concept_desc>
<concept_significance>500</concept_significance>
</concept>
<concept>
<concept_id>10010405.10010481.10010484</concept_id>
<concept_desc>Applied computing~Decision analysis</concept_desc>
<concept_significance>500</concept_significance>
</concept>
</ccs2012>
\end{CCSXML}

%

\maketitle
\renewcommand{\shortauthors}{Corbett-Davies et al.}

\section{Introduction}

Judges nationwide use algorithms to help 
decide whether defendants should be detained or released while awaiting trial~\cite{monahan2016,christin2015}.
One such algorithm, called COMPAS, assigns defendants risk scores between 1 and 10 that indicate how likely they are to commit a violent crime based on more than 100 factors, including age, sex and criminal history. 
For example, defendants with scores of 7 reoffend at twice the rate as those with scores of 3. 
Accordingly, defendants classified as high risk are much more likely to be detained while awaiting trial than those classified as low risk.

These algorithms do not explicitly use race as an input. Nevertheless, an analysis of defendants in Broward County, Florida \cite{angwin2016} revealed that black defendants are substantially more likely to be classified as high risk. Further, among defendants who ultimately did not reoffend, blacks were more than twice as likely as whites to be labeled as risky. Even though these defendants did not go on to commit a crime, being classified as high risk meant they were subjected to harsher treatment by the courts.
To reduce racial disparities of this kind, several authors recently have  proposed a variety of \emph{fair} decision algorithms~\cite{feldman2015,kamiran2013,hardt2016,joseph2016fairness,jabbari2016fair,hajian2011discrimination}.\footnote{We consider racial disparities because they have been at the center of many recent debates in criminal justice, but the same logic applies across a range of possible attributes, including gender.}

Here we reformulate algorithmic fairness as constrained optimization:
the objective is to maximize public safety 
while satisfying formal fairness constraints.
We show that for several past definitions of fairness,
the optimal algorithms that result require applying multiple, 
race-specific thresholds to individuals' risk scores.
One might, for example, detain white defendants who score above 4, but detain black defendants only if they score above 6.
We further show that the optimal \emph{unconstrained} algorithm requires applying a single, uniform threshold to all defendants.
This safety-maximizing rule thus satisfies one important understanding of equality: 
that all individuals are held to the same standard, irrespective of race.
Since the optimal constrained and unconstrained algorithms in general differ,
there is tension between reducing racial disparities 
and improving public safety.
By examining data from Broward County,
we demonstrate that this tension is more than theoretical.
Adhering to past fairness definitions can substantially decrease public safety; 
conversely, optimizing for public safety alone can produce stark racial disparities.

We focus here on the problem of designing algorithms for pretrial release decisions, but the principles we discuss apply to other domains, and also to human decision makers carrying out structured decision rules. We emphasize at the outset that algorithmic decision making does not preclude additional, or alternative, policy interventions. For example, one might provide released defendants with robust social services aimed at reducing recidivism, or conclude that it is more effective and equitable to replace pretrial detention with non-custodial supervision.
Moreover, regardless of the algorithm used, human discretion may be warranted in individual cases.

\section{Background}

\subsection{Defining algorithmic fairness}
\label{sec:def-fairness}

Existing approaches to algorithmic fairness typically
proceed in two steps. 
First, a formal criterion of fairness is defined;
then, a decision rule is developed to 
satisfy that measure, either exactly or approximately.
To formally define past fairness measures, we 
introduce a general notion of (randomized) decision rules.
Suppose we have a vector $x_i \in \RR^p$
that we interpret as the visible attributes of individual $i$.
For example, $x$ might represent a defendant's age, gender, race, and criminal history.
We consider binary decisions (e.g., $a_0=\text{release}$ and $a_1=\text{detain}$),
and define a \emph{decision algorithm}, or a \emph{decision rule}, to
be any function $d$ that specifies which action is taken for each individual.
To allow for probabilistic decisions, we require only that $d(x) \in [0,1]$.

\begin{definition}[Decision rule]
A decision rule is any measurable function $d: \RR^p \mapsto [0,1]$,
where we interpret $d(x)$ as the probability that action 
$a_1$ is taken for an individual with visible attributes $x$.
\end{definition}

Before defining algorithmic fairness, we need three additional concepts.
First, we define the \emph{group membership} of each individual to take a value from the set $\{g_1, \dots, g_k\}$. In most cases, we imagine these groups indicate an individual's race, but they might also represent gender or other protected attributes.
We assume an individual's racial  group can be inferred from their vector of observable attributes $x_i$, and so denote $i$'s group by $g(x_i)$.
For example, if we encode race as a coordinate in the vector $x$, 
then $g$ is simply a projection onto this coordinate.
Second, for each individual, we suppose there is a quantity
$y$ that specifies the benefit of taking action $a_1$ relative to action $a_0$.
For simplicity, we assume $y$ is binary and normalized to take values 0 and 1, but many of our
results can be extended to the more general case.
For example, in the pretrial setting, it is beneficial to
detain a defendant who would have committed a violent crime if released.
Thus, we might have $y_i = 1$ for those defendants who would have committed a violent crime if released, 
and $y_i=0$ otherwise.
Importantly, $y$ is not known exactly to the decision maker, who at the time of the decision 
has access only to information encoded in the visible features $x$.
Finally, we define random variables 
$X$ and $Y$ that take on values $X = x$ and $Y = y$ for an individual drawn randomly from the population of interest (e.g., the population of defendants for whom pretrial decisions must be made).

With this setup, we now describe three popular definitions of algorithmic fairness.

\begin{enumerate}
\item \emph{Statistical parity} means that an equal proportion of 
defendants are detained in each race group~\cite{kamishima2012,feldman2015,zemel2013,fish16}.
For example, white and black defendants are detained at equal rates. 
Formally, statistical parity means,
\begin{equation}
\EE[ d(X) \mid g(X) ] = \EE[ d(X) ].
\end{equation}

\item \emph{Conditional statistical parity} means that controlling for a limited set of ``legitimate'' risk factors, an equal proportion of defendants are detained within each race group~\cite{kamiran2013,dwork2012}.\footnote{%
Conditional statistical parity is closely related to the idea of \emph{fairness through blindness},
in which one attempts to create fair algorithms by prohibiting use of protected attributes, such as race. 
However, as frequently noted, it is difficult to restrict to ``legitimate'' features that do not at least partially correlate 
with race and other protected attributes, and so one
cannot be completely ``blind'' to the sensitive information~\cite{dwork2012}.
Moreover, unlike the other definitions of fairness,
this one does not necessarily reduce racial disparities. 
Conditional statistical parity mitigates these limitations
of the blindness approach while preserving its intuitive appeal.
}
For example, among defendants who have the same number of prior convictions, black and white defendants are detained at equal rates.  
Suppose $\ell : \RR^p \mapsto \RR^m$ is a projection of $x$ to factors considered legitimate. Then conditional statistical parity means,
\begin{equation}
\EE[ d(X) \mid \ell(X), \, g(X)] = \EE[ d(X) \mid \ell(X)].
\end{equation}

\item \emph{Predictive equality} means that the accuracy of decisions is equal across race groups, as measured by false positive rate (FPR)~\cite{hardt2016,zafar2016fairness,kleinberg2016inherent}. This condition means that among defendants who would not have gone on to commit a violent crime if released, detention rates are equal across race groups. Formally, predictive equality means,
\begin{equation}
\EE[ d(X) \mid Y=0, \, g(X)] = \EE[ d(X) \mid Y=0].
\end{equation}
As noted above, a major criticism of COMPAS is that the rate of false positives is higher among blacks than whites~\cite{angwin2016}.
\end{enumerate}

\subsection{Related work}

The literature on designing fair algorithms is extensive and interdisciplinary.
\citet{romei2014multidisciplinary} and \citet{vzliobaitemeasuring} survey various measures of fairness in decision making. 
Here we focus on algorithmic decision making in the criminal justice system, and briefly discuss several interrelated strands of past empirical and theoretical work.

Statistical risk assessment has been used in criminal justice for nearly one hundred years, 
dating back to parole decisions in the 1920s.
Several empirical studies have measured 
the effects of adopting such decision aids.
In a randomized controlled trial,
the Philadelphia Adult Probation and Parole
Department evaluated the effectiveness of a risk
assessment tool developed by Berk et al.~\cite{berk2009},
and found the tool 
reduced the burden on parolees without significantly
increasing rates of re-offense~\cite{ahlman2009}.
In a study by Danner et al.~\cite{danner2015}, pretrial services agencies in Virginia were randomly chosen to adopt supervision guidelines based on a risk assessment tool. 
Defendants processed by the chosen agencies were nearly twice as likely to be released, and these released defendants were on average less risky than 
those released by agencies not using the tool.
We note that despite
such aggregate benefits, 
some have argued that statistical tools do not provide 
sufficiently precise estimates of individual recidivism risk 
to ethically justify their use~\cite{starr2014evidence}.\footnote{%
Eric Holder, former Attorney General of the United States, has
been similarly critical of risk assessment tools, arguing
that ``[e]qual justice can only mean individualized
justice, with charges, convictions, and sentences
befitting the conduct of each defendant
and the particular crime he or she commits''~\cite{holder2014}.
}

Several authors have developed algorithms that guarantee formal definitions of fairness are satisfied. To ensure statistical parity,
\citet{feldman2015} propose ``repairing''
attributes or risk scores by converting them to within-group percentiles.
For example, a black defendant riskier than 90\% of black defendants would receive the same transformed score as a white defendant riskier than 90\% of white defendants.
A single decision threshold applied to the transformed scores would then result in equal detention rates across groups.
\citet{kamiran2013} propose a similar method (called ``local massaging'') to achieve conditional statistical parity. 
Given a set of decisions, they 
stratify the population by ``legitimate'' factors (such as number of prior convictions), and then alter decisions within each stratum so that:
(1) the overall proportion of people detained within each stratum remains unchanged;
and (2) the detention rates in the stratum are equal across race groups.\footnote{In their context, they consider human decisions, rather than algorithmic ones, but the same procedure can be applied to any rule.}
Finally, \citet{hardt2016} propose a method for constructing
randomized decision rules that ensure true positive and false positive rates
are equal across race groups, a criterion of fairness that they call \emph{equalized odds}; they further study the case in which only true positive rates must be equal, which they call \emph{equal opportunity}. 

The definitions of algorithmic fairness
discussed above assess the
fairness of \emph{decisions}; in contrast, 
some authors consider the fairness of \emph{risk scores}, 
like those produced by COMPAS.
The dominant fairness criterion in this case is \emph{calibration}.\footnote{Calibration is sometimes called \emph{predictive parity}; we use ``calibration'' here to distinguish it from predictive \emph{equality}, meaning equal false positive rates.}
Calibration means that among defendants with a given risk score,
the proportion who reoffend is the same across race groups.
Formally, given risk scores $s(X)$, calibration means,
\begin{equation}
\label{eq:calibration}
\Pr(Y = 1 \mid s(X), \, g(X)) = \Pr(Y = 1 \mid s(X)).
\end{equation}
Several researchers have pointed out that many notions of fairness are in conflict; \citet{berk2017} survey various fairness measures and their incompatibilities.
Most importantly, \citet{kleinberg2016inherent} prove that except in degenerate cases,
no algorithm can simultaneously satisfy the following three properties: 
(1) calibration;
(2) balance for the negative class, meaning that among defendants who would not commit a crime if released, average risk score is equal across race group;
and (3) balance for the positive class, meaning that among defendants who would commit a crime if released, average risk score is equal across race group.
\citet{chouldechova2016fair} similarly considers the tension between calibration and alternative definitions of fairness.

\section{Optimal decision rules}
Policymakers wishing to satisfy a particular definition of fairness are necessarily restricted in the set of decision rules that they can apply. 
In general, however, multiple rules satisfy any given fairness criterion, 
and so one must still decide which rule 
to adopt from among those satisfying the constraint.
In making this choice, we assume policymakers seek to maximize 
a specific notion of utility, which we detail below.

In the pretrial setting, one must balance two factors: 
the benefit of preventing violent crime committed by released defendants on the one hand,
and the social and economic costs of detention on the other.\footnote{Some jurisdictions consider flight risk, but safety is typically the dominant concern.} 
To capture these costs and benefits, we define the \emph{immediate utility} of a decision rule as follows.

\begin{definition}[Immediate utility]
For $c$ a constant such that $0 < c < 1$,
the immediate utility of a decision rule $d$ is
\begin{align}
u(d, c) & = \EE \left [ Yd(X) - cd(X) \right ] \nonumber \\
& = \EE \left [ Yd(X) \right ] - c\EE \left [ d(X) \right ]. \label{eq:utility}
\end{align}
\end{definition}

The first term in Eq.~\eqref{eq:utility}
is the expected benefit of the decision rule, and the second term its costs.\footnote{%
We could equivalently define immediate utility in terms of the relative costs 
of false positives and false negatives, but we believe our formulation
 better reflects the concrete trade-offs policymakers face.
}
For pretrial decisions, the first term is proportional to the expected number of violent crimes prevented under $d$,
and the second term is proportional to the expected number of people detained.
The constant $c$ is the cost of detention in units of crime prevented.
We call this \emph{immediate} utility to clarify that it reflects only the proximate 
costs and benefits of decisions. It does not, for example, consider the long-term,
systemic effects of a decision rule.

 We can rewrite immediate utility as
 \begin{align}
 u(d, c) & = \EE \left [ \EE \left [ Yd(X) - cd(X) \mid X \right ] \right ] \nonumber \\
 & = \EE \left [ p_{Y|X} d(X) - cd(X) \right ] \nonumber \\
 & = \EE \left [ d(X) (p_{Y|X} - c) \right ] \label{eq:utility2}
 \end{align}
 where $p_{Y|X} = \Pr(Y = 1 \mid X)$.
This latter expression shows that it is beneficial to detain an individual precisely when $p_{Y|X} > c$, and is a convenient reformulation for our derivations below.

Our definition of immediate utility implicitly encodes two important assumptions.
First, since $Y$ is binary, all violent crime is assumed to be equally costly.
Second, the cost of detaining every individual is assumed to be $c$, 
without regard to personal characteristics. 
Both of these restrictions can be relaxed without
significantly affecting our formal results.
In practice, however, it is often difficult to approximate 
individualized costs and benefits of detention, and 
so we proceed with this framing of the problem.

Among the rules that satisfy a chosen fairness criterion, 
we assume policymakers would prefer the one that maximizes immediate utility.
For example, if policymakers wish to ensure statistical parity, they might first consider all decision rules that guarantee statistical parity is satisfied, 
and then adopt the utility-maximizing rule among this subset.

For the three fairness definitions we consider (statistical parity, conditional statistical parity, and predictive equality) 
we show next 
that the optimal algorithms that result are simple, deterministic threshold rules based on $p_{Y\mid X}$.
For statistical parity and predictive equality, 
the optimal algorithms detain defendants when $p_{Y\mid X}$ exceeds 
a group-specific threshold.
For example, black defendants might be detained if $p_{Y\mid X} \geq 0.2$,
and white defendants detained if $p_{Y\mid X} \geq 0.1$.
The exact thresholds for statistical parity differ from those for predictive equality.
For conditional statistical parity, the thresholds in the optimal decision rule depend on both
group membership and the ``legitimate'' factors $\ell(X)$. 
Finally, we show that the unconstrained utility-maximizing algorithm 
applies a single, uniform threshold to all individuals, irrespective of group membership.
Importantly, since the optimal constrained algorithms differ from the optimal unconstrained algorithm, fairness has a cost.

To prove these results, we require one more technical criterion: 
that the distribution of $p_{Y|X}$ has a strictly positive density on $[0,1]$.
Intuitively, $p_{Y|X}$ is the risk score for a randomly selected individual with visible attributes $X$. Having a density means that the distribution of $p_{Y|X}$ does not have any point masses: for example, the probability that $p_{Y|X}$ \emph{exactly} equals 0.1 is zero. Positivity means that in any sub-interval, there is non-zero (though possibly small) probability an individual has risk score in that interval.
From an applied perspective, this is a relatively weak condition, since 
starting from any risk distribution 
we can achieve this property by smoothing the distribution 
by an arbitrarily small amount.
But the criterion serves two important technical purposes.
First, with this assumption, there are always deterministic decision rules that satisfy each fairness definition; and
second, it implies that the optimal decision rules are unique.
We now state our main theoretical result.

\begin{theorem}
\label{thm:mother}
Suppose $\mathcal{D}(p_{Y|X})$ has positive density on $[0,1]$. 
The optimal decision rules $d^*$ that maximize $u(d,c)$ under various fairness conditions have the following form, 
and are unique up to a set of probability zero.
\begin{enumerate}
\item The unconstrained optimum is
\begin{equation*}
d^*(X) = 
\begin{cases}
1 & p_{Y|X} \geq c \\
0 & \textnormal{otherwise}
\end{cases}
\end{equation*}
\item Among rules satisfying statistical parity, the optimum is
\begin{equation*}
d^*(X) = 
\begin{cases}
1 & p_{Y|X} \geq t_{g(X)} \\
0 & \textnormal{otherwise}
\end{cases}
\end{equation*}
where $t_{g(X)} \in [0,1]$ are constants that depend only on group membership. The optimal rule satisfying predictive equality takes the same form, though the values of the group-specific thresholds are different.
\item Additionally suppose $\mathcal{D}(p_{Y|X}\mid \ell(X)=l)$ has positive density on $[0,1]$. 
Among rules satisfying conditional statistical parity, the optimum is
\begin{equation*}
d^*(X) = 
\begin{cases}
1 & p_{Y|X} \geq t_{g(X), \ell(X)} \\
0 & \textnormal{otherwise}
\end{cases}
\end{equation*}
where $t_{g(X),\ell(X)} \in [0,1]$ are constants that depend on group membership and ``legitimate'' attributes.
\end{enumerate}
\end{theorem}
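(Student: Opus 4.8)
The plan is to base everything on the reformulation in Eq.~\eqref{eq:utility2}, $u(d,c)=\EE[\,d(X)(p_{Y|X}-c)\,]$, and to treat each fairness condition as a linear program in $d$ subject to finitely many linear equality constraints, each constraint living inside a ``cell'' determined by group membership (and, for conditional statistical parity, also by $\ell(X)$). Part~(1) is then immediate: the integrand $d(x)(p_{Y|x}-c)$ is maximized over $d(x)\in[0,1]$ by setting $d(x)=1$ when $p_{Y|x}>c$ and $d(x)=0$ when $p_{Y|x}<c$, and the tie set $\{p_{Y|X}=c\}$ has probability zero by the density hypothesis, so the threshold rule at $c$ is optimal and is the unique maximizer up to a null set.

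For statistical parity, I would first note the constraint is equivalent to the existence of a single detention rate $\rho=\EE[d(X)]$ with $\EE[d(X)\mid g(X)=g]=\rho$ for every group. With $\rho$ fixed, $u$ decomposes as $\sum_g\Pr(g(X)=g)\,\EE[(p_{Y|X}-c)d(X)\mid g(X)=g]$ and the problem separates into per-group subproblems: maximize $\EE[(p_{Y|X}-c)d\mid g]$ subject to $\EE[d\mid g]=\rho$. I would solve each by an elementary exchange (Neyman--Pearson) argument: let $t_g$ be the threshold with $\Pr(p_{Y|X}\ge t_g\mid g)=\rho$ (which exists, is unique, and gives a deterministic rule because $p_{Y|X}\mid g$ has positive density), and observe that for any feasible $d$,
\[
\EE\!\left[(p_{Y|X}-c)\bigl(\mathbbm{1}[p_{Y|X}\ge t_g]-d\bigr)\mid g\right]
=\EE\!\left[(p_{Y|X}-t_g)\bigl(\mathbbm{1}[p_{Y|X}\ge t_g]-d\bigr)\mid g\right]\ge 0,
\]
since the leftover $(t_g-c)$ term cancels (both rules have rate $\rho$) and the integrand is nonnegative pointwise; equality forces $d=\mathbbm{1}[p_{Y|X}\ge t_g]$ on a probability-one subset of the group. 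It then remains to optimize the scalar $\rho$: writing $U(\rho)=\sum_g\Pr(g(X)=g)\,\EE[(p_{Y|X}-c)\mathbbm{1}[p_{Y|X}\ge t_g(\rho)]\mid g]$, one gets $U'(\rho)=\sum_g\Pr(g(X)=g)(t_g(\rho)-c)$ with each $t_g(\rho)$ strictly decreasing in $\rho$ (positive density), so $U$ is strictly concave, $\rho^*$ is unique, and together with per-group uniqueness this yields the stated a.e.\ uniqueness.

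Predictive equality runs on the same template with one twist: within group $g$ the equal-FPR constraint rewrites as $\EE[(1-p_{Y|X})d(X)\mid g]=\phi\,\EE[1-p_{Y|X}\mid g]$ for a common $\phi$, so the per-group subproblem is ``maximize $\EE[(p_{Y|X}-c)d\mid g]$ subject to a fixed value of $\EE[(1-p_{Y|X})d\mid g]$''. Its Lagrangian integrand $(1+\nu_g)p_{Y|X}-(c+\nu_g)$ is still affine in $p_{Y|X}$ with positive slope $1+\nu_g=(1-c)/(1-t_g)>0$, so the optimizer is again a plain threshold $\mathbbm{1}[p_{Y|X}\ge t_g]$, and the same exchange identity (now with the nonzero multiplier $\nu_g$) plus the scalar-optimization step finishes it. For conditional statistical parity nothing new happens, only bookkeeping: the constraint forces, within each stratum $\ell(X)=\ell$, a common across-group rate $\rho_\ell$; the objective decomposes over $(\ell,g)$ cells into exactly the statistical-parity subproblem (using the added hypothesis that $p_{Y|X}\mid\ell(X)=\ell$ has positive density), the budgets $\rho_\ell$ decouple across strata and are optimized stratum by stratum, and the result is thresholds $t_{g,\ell}$.

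I expect the main obstacle to be the predictive-equality exchange step: one must check that, once the constraint is written as the weighted budget $\EE[(1-p_{Y|X})d\mid g]=\beta_g$ rather than a plain mass budget, the optimizer is still a threshold \emph{in $p_{Y|X}$ itself} (it is, because both the objective density and the constraint density are affine in $p_{Y|X}$), and that the Lagrange multiplier stays in the regime where the effective slope $1+\nu_g$ is positive. The secondary nuisance is establishing uniqueness cleanly — pinning down that each optimal budget parameter ($\rho^*$, $\phi^*$, or $\rho_\ell^*$) is unique, for which I would rely on strict monotonicity of thresholds in the budget (a direct consequence of the positive-density assumption) rather than any convex-duality machinery, and separately handle the boundary cases where an optimal threshold sits at $0$ or $1$.
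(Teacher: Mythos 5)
Your proposal is correct. For part (1), the statistical-parity half of part (2), and part (3) it is essentially the paper's argument made explicit: the paper simply asserts these cases are ``clear,'' whereas your Neyman--Pearson exchange identity and the strict concavity of $U(\rho)$ in the common detention rate spell out precisely the sketch the paper relies on (and the paper's own uniqueness argument is exactly your scalar-concavity step). Where you genuinely diverge is predictive equality. The paper argues constructively: starting from any non-threshold rule $d$ it finds a group $a^*$ with misclassified mass $\beta$ on both sides of the rate-matching threshold, builds a two-threshold perturbation $d'_{t_1,t_2}$, lowers $t_1$ to some $t_1'$ so that the false positive rate is exactly preserved while strictly more people are detained, and then notes that with equal FPRs the utility gap collapses to $(1-c)\left(\EE[d'_{t_1',t_2}(X)]-\EE[d(X)]\right)>0$. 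You instead fix the common FPR $\phi$, rewrite the constraint in each group as the weighted budget $\EE[(1-p_{Y|X})d(X)\mid g]=\phi\,\EE[1-p_{Y|X}\mid g]$, and rerun the exchange argument with the tilted integrand; the obstacle you flag resolves exactly as you anticipate, since with $\nu_g=(t_g-c)/(1-t_g)$ one has the pointwise identity $(p_{Y|X}-c)+\nu_g(p_{Y|X}-1)=(1+\nu_g)(p_{Y|X}-t_g)$ and the $\nu_g(p_{Y|X}-1)$ term integrates to zero against $\mathbbm{1}\{p_{Y|X}\ge t_g\}-d$ for any budget-feasible $d$, so no Lagrange duality is actually needed and the effective slope $1+\nu_g=(1-c)/(1-t_g)$ is automatically positive away from the degenerate boundary $t_g=1$ (where $\phi=0$ forces $d=0$ a.e.\ anyway). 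Your route is more systematic --- one template covers all three constraints and delivers per-group a.e.\ uniqueness for free, whereas the paper must separately reparametrize FPR by detention rate via its strictly increasing map $g(\sigma)$. The paper's constructive swap buys one thing your decomposition does not immediately give: because the perturbation preserves the FPR exactly, it extends verbatim to the relaxed constraint that FPRs differ by at most $\delta$ across groups, which the paper exploits in the remark following the theorem; your fixed-$\phi$ argument would need a small supplement there.
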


Before presenting the formal proof of Theorem~\ref{thm:mother}, we sketch out the argument. From Eq.~\eqref{eq:utility2}, it follows immediately that (unconstrained) utility is maximized for a rule that deterministically detains defendants if and only if $p_{Y|X} \geq c$.
The optimal rule satisfying statistical parity necessarily detains the same proportion $p^*$ of defendants in each group; 
it is thus clear that utility is maximized by setting the thresholds so that the riskiest proportion $p^*$ of defendants is detained in each group.
Similar logic establishes the result for conditional statistical parity. 
(In both cases, our assumption on the distribution of the risk scores ensures these thresholds exist.)
The predictive equality constraint 
is the most complicated to analyze.
Starting from any non-threshold rule $d$ satisfying predictive equality, 
we show that one can derive a rule $d'$ satisfying predictive equality
such that $u(d', c) > u(d,c)$; this in turn implies a threshold rule is optimal.
We construct $d'$ in three steps.
First, we show that under the original rule $d$ there must exist some low-risk defendants that are detained while some relatively high-risk defendants
are released.
Next, we show that if $d'$ has the same false positive rate as $d$,
then $u(d', c) > u(d,c)$ if and only if more defendants are detained under $d'$.
This is because having equal false positive rates means that $d$ and $d'$ 
detain the same number of people who would not have committed a violent crime if released;
under this restriction, detaining more people means detaining more people who would have committed
a violent crime, which improves utility.
Finally, we show that one can preserve false positive rates by releasing the low-risk individuals
and detaining an even greater number of the high-risk individuals;
this last statement follows because releasing low-risk individuals decreases the false positive rate
faster than detaining high-risk individuals increases it.

\begin{proof}

As described above, it is clear that threshold rules are optimal absent fairness constraints, and also in the case of statistical parity and conditional statistical parity.
We now establish the result for predictive equality; we then prove the uniqueness of these rules.

Suppose $d$ is a decision rule satisfying equal false positive rates
and which is not equivalent to a multiple-threshold rule.
We shall construct a new decision rule $d'$
satisfying equal false positive rates, and such that
$u(d',c) > u(d,c)$.
Since this construction shows any non-multiple-threshold rule can be improved, the optimal rule must be a multiple-threshold rule.

Because $d$ is not equivalent to a multiple-threshold rule, there exist relatively low-risk individuals that are detained and relatively high-risk 
individuals that are released.
To see this, define $t_a$ to be the threshold that detains the same proportion of group $a$ as $d$ does:
\begin{equation*}
\EE \left [d(X) \mid g(X)=a\right] = \EE \left [\mathbbm{1}\{p_{Y|X}\geq t_a\} \mid g(X)=a\right].
\end{equation*}
Such thresholds exist by our assumption on the distribution of $p_{Y|X}$.
Since $d$ is not equivalent to a multiple-threshold rule, there must be a group $a^*$ for which, in expectation, some defendants below $t_{a^*}$ will be detained and an equal proportion of defendants above $t_{a^*}$ released. Let $2\beta$ equal the proportion of defendants ``misclassified" (with respect to $t_{a^*}$) in this way:
\begin{align*}
\beta &= \EE \left[\mathbbm{1}\{p_{Y|X} \geq t_{a^*}\}(1-d(X)) \mid g(X)=a^*\right]\\
&= \EE \left[\mathbbm{1}\{p_{Y|X}< t_{a^*}\}d(X) \mid g(X)=a^*\right]\\
&>0,
\end{align*}
where we note that $\Pr( p_{Y|X} = t_{a^*}) = 0$.

For $0 \leq t_1 \leq t_2 \leq 1$, define the rule
\begin{equation*}
d'_{t_1,t_2}(X) = \begin{cases}
1 & p_{Y|X}\geq t_2,\, g(X)=a^* \\
0 & p_{Y|X}< t_1,\, g(X)=a^* \\
d(X) & \textnormal{otherwise}
\end{cases}.
\end{equation*}
This rule detains
\begin{align*}
\beta_2(t_1,t_2)=\EE \left[\mathbbm{1}\{p_{Y|X}\geq t_2\}(1-d(X)) \mid g(X)=a^*\right]
\end{align*}
defendants above the threshold who were released under $d$. Further,
\begin{align*}
\gamma_2(t_1,t_2) &=\EE \left[ \mathbbm{1}\{p_{Y|X}\geq t_2\}(1-d(X))(1-p_{Y|X}) \mid g(X)=a^*\right]\\
&\leq(1-t_2)\beta_2(t_1,t_2)
\end{align*}
defendants are newly detained and ``innocent'' (i.e.,  would not have gone on to commit a violent crime). Similarly, $d'_{t_1,t_2}$ releases
\begin{align*}
\beta_1(t_1,t_2)=\EE \left[\mathbbm{1}\{p_{Y|X}< t_1\}d(X) \mid g(X)=a^*\right]
\end{align*}
defendants below the threshold that were detained under $d$, resulting in
\begin{align*}
\gamma_1(t_1,t_2) &=\EE \left[ \mathbbm{1}\{p_{Y|X}< t_1\}d(X)(1-p_{Y|X}) \mid g(X)=a^*\right]\\
&\geq(1-t_1)\beta_1(t_1,t_2)
\end{align*}
fewer innocent detainees.

Now choose $t_1< t_{a^*} < t_2$ such that $\beta_1(t_1, t_2)=\beta_2(t_1,t_2)=\beta/2.$
Such thresholds exist because: $\beta_1(t_{a^*},t_{a^*})=\beta_2(t_{a^*},t_{a^*})=\beta$, $\beta_1(0,\cdot) = \beta_2(\cdot,1) = 0$, 
and the functions $\beta_i$ are continuous in each coordinate. 
Then, $\gamma_1(t_1,t_2) \geq (1-t_1)\beta/2$ and $\gamma_2(t_1,t_2) \leq (1-t_2)\beta/2$, so $\gamma_1(t_1,t_2) > \gamma_2(t_1,t_2)$. 
This inequality implies that $d'_{t_1,t_2}$ releases more innocent low-risk people than it detains innocent high-risk people (compared to $d$). 

To equalize false positive rates between $d$ and $d'$ we must equalize $\gamma_1$ and $\gamma_2$, and so we need to decrease $t_1$ in order to release fewer low-risk people. 
Note that $\gamma_1$ is continuous in each coordinate, 
$\gamma_1(0, \cdot)=0$, and 
$\gamma_2$ depends only on its second coordinate.
There thus exists $t_1' \in [0,t_1)$ such that $\gamma_1(t_1',t_2) = \gamma_2(t_1, t_2) = \gamma_2(t_1',t_2)$.
Further, since $t_1' < t_1$,
$\beta_1(t_1',t_2) < \beta_1(t_1,t_2) = \beta_2(t_1,t_2)$.
Consequently, $d'_{t_1',t_2}$ has the same false positive rate as $d$ but detains more people.

Finally, since false positive rates are equal, detaining extra people means detaining more people who go on to commit a violent crime. As a result 
$d'_{t_1',t_2}$ has strictly higher immediate utility than $d$:
\begin{align*}
u(d'_{t_1',t_2},c)-u(d,c)
& =\EE \left[d'_{t_1',t_2}(X)(p_{Y|X} - c)\right] -  \EE \left[d(X)(p_{Y|X} - c)\right] \\
&= \EE \left[d'_{t_1',t_2}(X)(1 - c)\right] - \EE \left[d'_{t_1',t_2}(X)(1 - p_{Y|X})\right]  \\
&\qquad - \EE \left[d(X)(1 - c)\right] + \EE \left[d(X)(1 - p_{Y|X})\right]\\
&= (1-c)\left( \EE \left[d'_{t_1',t_2}(X)\right]-\EE \left[d(X)\right]\right)\\
&= (1-c)\left[\beta_2(t_1',t_2)-\beta_1(t_1',t_2)\right]\\
&>0.
\end{align*}
The second-to-last equality follows from the fact that
$d'_{t_1',t_2}$ and $d$ have equal false positive rates, which
in turn implies that
\begin{align*}
\EE \left[d'_{t_1',t_2}(X)(1 - p_{Y|X} )\right]  =  \EE \left[d(X)(1 - p_{Y|X} )\right].
\end{align*}
Thus, starting from an arbitrary non-threshold rule satisfying predictive equality, 
we have constructed a threshold rule with strictly higher utility that also satisfies predictive equality; as a consequence, threshold rules are optimal.

We now establish uniqueness of the optimal rules for each fairness constraint.
Optimality for the unconstrained algorithm is clear, and so we consider only the constrained rules, starting with statistical parity.
Denote by $d_\alpha$ the rule that detains the riskiest proportion $\alpha$ of individuals in each group; this rule is the unique 
optimum among those with detention rate $\alpha$ satisfying statistical parity. Define 
\begin{align*}
f(\alpha) &= u(d_\alpha, c) \\
&= \EE \left[d_\alpha(X)p_{Y|X}\right] - c \alpha.
\end{align*}
The first term of $f(\alpha)$ is strictly concave, because 
$d_\alpha$ detains progressively less risky people as $\alpha$ increases. The second term of $f(\alpha)$ is linear. Consequently, $f(\alpha)$ is strictly concave and has a unique maximizer. 
A similar argument shows uniqueness of the optimal rule for conditional statistical parity.

To establish uniqueness in the case of predictive equality, 
we first restrict to the set of threshold rules, 
since we showed above that non-threshold rules are suboptimal.
Let $d_{\sigma}$ be the unique, optimal threshold rule having false positive rate $\sigma$ in each group.
Now let $g(\sigma)$ be the detention rate under $d_{\sigma}$.
Since $g$ is strictly increasing, there is 
a unique, optimal threshold rule $d'_{\alpha}$ that satisfies predictive equality and detains a proportion $\alpha$ of defendants:
namely, $d'_{\alpha} = d_{g^{-1}(\alpha)}$.
Uniqueness now follows by the same argument we gave for statistical parity.
\end{proof}

Theorem~\ref{thm:mother} shows that threshold rules maximize immediate utility when the three fairness criteria we consider hold \emph{exactly}.
Threshold rules are also optimal if we only require the constraints hold \emph{approximately}.
For example, a threshold rule maximizes immediate utility under the requirement that false positive rates differ by at most a constant $\delta$ across groups.
To see this, note that our construction in Theorem~\ref{thm:mother} preserves false positive rates. Thus, starting from a non-threshold rule that satisfies the (relaxed) constraint, one can construct a threshold rule that satisfies the constraint and strictly improves immediate utility, establishing the optimality of threshold rules.

Threshold rules have been proposed previously
to achieve the various fairness criteria we analyze~\cite{feldman2015,kamiran2013,hardt2016}.
We note two important distinctions between our work and past research.
First, the optimality of such algorithms has not been previously established, 
and indeed previously proposed decision rules are not always optimal.\footnote{%
Feldman et al.'s~\cite{feldman2015} algorithm for achieving statistical parity is optimal only if one ``repairs'' risk scores $p_{Y|X}$ rather than individual attributes.
Applying Kamiran et. al's local massaging algorithm~\cite{kamiran2013} for achieving conditional statistical parity yields a non-optimal multiple-threshold rule, even if one starts with the optimal single threshold rule.
\citet{hardt2016} hint at the optimality of their algorithm for achieving predictive equality---and in fact their algorithm is optimal---but they do not provide a proof.
}
Second, our results clarify the need for race-specific decision 
thresholds to achieve prevailing notions of algorithmic fairness.
We thus identify an inherent tension between satisfying common
fairness constraints and treating all individuals equally, irrespective of race.

Our definition of immediate utility does not put a hard cap on the number of people detained, 
but rather balances detention rates with public safety benefits via the constant $c$.
Proposition~\ref{prop:utility-equiv} below shows that one can equivalently view the optimization
problem as maximizing public safety while detaining a specified number of individuals.
As a consequence, the results in Theorem~\ref{thm:mother}---where immediate utility is maximized under a fairness constraint---also hold when public safety is optimized under constraints on both fairness and the proportion of defendants detained.
This reformulation is useful for our empirical analysis in Section~\ref{sec:empirical}.

\begin{proposition}
\label{prop:utility-equiv}
Suppose $D$ is the set of decision rules satisfying 
statistical parity, conditional statistical parity, predictive equality,
or the full set of all decision rules. 
There is a bijection $f$ on the interval $[0,1]$ such that
\begin{equation}
\label{eq:equiv}
\argmax_{d \in D} \EE \left [ Y d(X) - cd(X) \right ] 
= 
\argmax_{\substack{d \in D\\ \EE [d(X)] = f(c)}} 
\EE \left [ Y d(X) \right ] 
\end{equation}
where the equivalence of the maximizers in \eqref{eq:equiv} is defined up to a set of probability zero.
\end{proposition}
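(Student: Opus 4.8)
The plan is to let $f(c)$ be the detention rate of the essentially unique utility‑maximizing rule in $D$, to read the identity \eqref{eq:equiv} off the fact that the detention cost is constant on each level set of the detention rate, and finally to check that $f$ is a bijection of $[0,1]$.

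\emph{Definition of $f$.} For $c\in(0,1)$, Theorem~\ref{thm:mother} shows $u(\cdot,c)$ has a maximizer $d^*_c$ over $D$ that is a threshold rule and is unique up to a set of probability zero. For the endpoints, note that $d\equiv 1$ and $d\equiv 0$ lie in each of the four families $D$, and that $u(\cdot,0)=\EE[p_{Y|X}\,d(X)]$ and $u(\cdot,1)=\EE[(p_{Y|X}-1)d(X)]$ are maximized, uniquely up to a null set, by $d^*_0\equiv 1$ and $d^*_1\equiv 0$, since $0<p_{Y|X}<1$ almost surely under the positive‑density hypothesis. Set $f(c)=\EE[d^*_c(X)]$, so in particular $f(0)=1$ and $f(1)=0$.

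\emph{The identity.} Write $D_\alpha=\{d\in D:\EE[d(X)]=\alpha\}$. On $D_\alpha$ the cost term $c\,\EE[d(X)]$ equals the constant $c\alpha$, so $u(d,c)=\EE[Yd(X)]-c\alpha$ for every $d\in D_\alpha$, and hence the maximizers of $u(\cdot,c)$ over $D_\alpha$ coincide with those of $\EE[Yd(X)]$ over $D_\alpha$. Apply this with $\alpha=f(c)$: since $d^*_c\in D_{f(c)}$ and $d^*_c$ attains the maximum of $u(\cdot,c)$ over all of $D\supseteq D_{f(c)}$, it is the unique maximizer of $u(\cdot,c)$ over $D_{f(c)}$, hence the unique maximizer of $\EE[Yd(X)]$ over $D_{f(c)}$. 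Conversely, if $\tilde d\in D_{f(c)}$ maximizes $\EE[Yd(X)]$ then $\EE[Y\tilde d(X)]\ge\EE[Yd^*_c(X)]$, so $u(\tilde d,c)=\EE[Y\tilde d(X)]-cf(c)\ge u(d^*_c,c)=\max_{d\in D}u(d,c)$, forcing $\tilde d=d^*_c$ up to a null set. Both sides of \eqref{eq:equiv} thus equal the equivalence class of $d^*_c$.

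\emph{The bijection.} Adding the optimality inequalities $u(d^*_{c_1},c_1)\ge u(d^*_{c_2},c_1)$ and $u(d^*_{c_2},c_2)\ge u(d^*_{c_1},c_2)$ and simplifying gives $(c_2-c_1)\bigl(f(c_1)-f(c_2)\bigr)\ge 0$, so $f$ is non‑increasing. For strict monotonicity and continuity I would use the structure in the proof of Theorem~\ref{thm:mother}: in each of the four cases that argument produces a function $s_D(\alpha)$, the largest value of $\EE[Yd(X)]$ attainable by a rule in $D$ detaining a proportion $\alpha$, which is strictly concave on $[0,1]$ and — using positivity of the density — continuously differentiable on $(0,1)$ with $s_D'(0^+)=1$ and $s_D'(1^-)=0$, and for which $\max_{d\in D}u(d,c)=\max_{\alpha\in[0,1]}\bigl(s_D(\alpha)-c\alpha\bigr)$ with maximizer $f(c)$. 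For $c\in(0,1)$ the first‑order condition $s_D'(\alpha)=c$ then has a unique solution depending continuously and strictly decreasingly on $c$; together with $f(0)=1$ and $f(1)=0$ this exhibits $f$ as a continuous, strictly decreasing surjection of $[0,1]$ onto itself, hence a bijection. The main obstacle is precisely this surjectivity, i.e.\ showing the optimal detention rate moves continuously through all of $[0,1]$; this is where the positive‑density hypothesis and the explicit threshold form of the optima are needed, and, as in Theorem~\ref{thm:mother}, the predictive‑equality case is the most delicate, since there one must first show that the reparametrization of threshold rules by their detention rate (the map $g$ in that proof) is continuous and strictly increasing. The identity \eqref{eq:equiv} itself, by contrast, is immediate once one notices that the detention cost is constant on every set $D_\alpha$.
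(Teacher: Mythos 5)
Your proposal is correct and follows essentially the same route as the paper: define $f(c)$ as the detention rate of the unique optimizer, deduce the identity from the fact that the cost term is constant on each level set $\{d:\EE[d(X)]=\alpha\}$ together with uniqueness, and establish bijectivity via the first-order condition $\dv{\alpha}\EE[Yd_\alpha(X)]=c$, strict monotonicity, and the endpoint values $f(0)=1$, $f(1)=0$ with continuity. If anything, you are more explicit than the paper about the points it glosses over (continuity of $f$, the endpoint cases, and the reparametrization by detention rate in the predictive-equality case), but the underlying argument is the same.
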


\begin{proof}
Let $f(c) = \EE \left[d^*(X)\right]$, where $d^*$ is the unique maximizer of $u(d, c)$ under the constraint $d \in D$.
For a fixed $c$, if a decision rule maximizes the right-hand side of \eqref{eq:equiv} then it is straightforward to see that it also maximizes the left-hand side. By uniqueness of the solution to the left-hand side, the solution to the right-hand side is also unique.
The equality in Eq.~\eqref{eq:equiv} thus holds for all $c$.

It remains to be shown that $f$ is a bijection. 
For fixed $c$ and $\alpha$, 
the proof of Theorem~\ref{thm:mother}
established that there is a unique, utility-maximizing 
threshold rule $d_{\alpha} \in D$ that detains a fraction $\alpha$ of individuals.
Let $g(\alpha) = u(d_{\alpha}, c)$.
Now,
\begin{align*}
g'(\alpha) &= \dv{\alpha} \EE \left[Yd_{\alpha}(X)-cd_{\alpha}(X)\right]\\
&= \dv{\alpha} \left(\EE \left[Yd_{\alpha}(X)\right]-c\alpha\right)
\end{align*}
and so $g(\alpha)$ is maximized at $\alpha^*$ such that
\begin{equation*}
\dv{\alpha} \EE \left[Yd_{\alpha}(X)\right]=c
\end{equation*}
In other words, the optimal detention rate $\alpha^*$ is such that the marginal person detained has probability $c$ of reoffending. Thus, as $c$ decreases, the optimal detention threshold decreases, and the proportion detained increases. Consequently, if $c_1 < c_2$ then $f(c_1) > f(c_2)$, 
and so $f$ is injective.
To show that $f$ is surjective, note that $f(0) = 1$ and $f(1) = 0$; the result now follows from continuity of $f$.
\end{proof}

\section{The cost of fairness}
\label{sec:empirical}

As shown above, the optimal algorithms under past notions of fairness differ from the unconstrained solution.\footnote{%
One can construct examples in which the group-specific thresholds coincide, 
leading to a single threshold, but it is unlikely for the thresholds to be \emph{exactly} equal in practice. 
We discuss this possibility further in Section~\ref{sec:incompatibility}.
}
Consequently, satisfying common definitions of fairness means one must in theory sacrifice some degree of public safety. 
We turn next to the question of how great this public safety loss might be in practice.

We use data from Broward County, Florida originally compiled by ProPublica~\cite{larson2016}. 
Following their analysis, we only consider black and white defendants who were assigned COMPAS risk scores within 30 days of their arrest, 
and were not arrested for an ordinary traffic crime.
We further restrict to only those defendants who spent at least two years (after their COMPAS evaluation) outside a correctional facility without being arrested for a violent crime, or were arrested for a violent crime within this two-year period.
Following standard practice, we use this two-year violent recidivism metric to approximate the benefit $y_i$ of detention: 
we set $y_i=1$ for those who reoffended, and $y_i=0$ for those who did not.
For the 3,377 defendants satisfying these criteria, 
the dataset includes race, age, sex, number of prior convictions, 
and COMPAS violent crime risk score (a discrete score between 1 and 10). 

The COMPAS scores may not be the most accurate estimates of risk, both because the scores are discretized and because they are not trained specifically for Broward County. 
Therefore, to estimate $p_{Y|X}$ we re-train a risk assessment model that predicts two-year violent recidivism using 
$L^1$-regularized logistic regression followed by Platt scaling~\cite{platt99}.
The model is based on all available features for each defendant, excluding race. 
Our risk scores achieve higher AUC on a held-out set of defendants than the COMPAS scores (0.75 vs. 0.73). 
We note that adding race to this model does not improve performance, as measured by AUC on the test set.

\begin{table}[t]
\centerline{
\begin{tabular}{p{2.5cm} C{2.7cm} C{2.4cm}}
Constraint & Percent of detainees that are low risk & Estimated increase in violent crime \\ 
  \hline
Statistical parity & 17\% & 9\% \\ 
Predictive equality & 14\% & 7\% \\ 
Cond. stat. parity & 10\% & 4\% \\ 
\end{tabular}
}
\vspace{4mm}
\caption{Based on the Broward County data, satisfying common fairness definitions results in detaining low-risk defendants while reducing public safety.
For each fairness constraint, we estimate the increase in violent crime committed by released defendants, relative to a rule that optimizes for public safety alone; 
and the proportion of detained defendants that are low risk 
(i.e., would be released if we again considered only public safety).
}
\label{table:cost-of-fairness}
\end{table}

We investigate the three past fairness definitions previously discussed: 
statistical parity, conditional statistical parity, and predictive equality. 
For each definition, we find the set of thresholds that produce a decision rule that: 
(1) satisfies the fairness definition;
(2) detains 30\% of defendants; 
and (3) maximizes expected public safety subject to (1) and (2). 
The proportion of defendants detained is chosen to match the fraction of defendants classified as medium or high risk by COMPAS (scoring 5 or greater).
Conditional statistical parity requires that one define the ``legitimate'' factors $\ell(X)$,
and this choice significantly impacts results. 
For example, if all variables are deemed legitimate, then this fairness condition imposes no constraint on the algorithm. 
In our application, we consider only a defendant's number of prior convictions to be legitimate;
to deal with sparsity in the data, we partition prior convictions into four bins: 0, 1--2, 3--4, and 5 or more. 

We estimate two quantities for each decision rule: 
the increase in violent crime committed by released defendants, relative to a rule that optimizes for public safety alone, ignoring formal fairness requirements; and the proportion of detained defendants that are low risk (i.e., would be released if we again considered only public safety).
We compute these numbers on 100 random train-test splits of the data. 
On each iteration, we train the risk score model and find the optimal thresholds using 70\% of the data,
and then calculate the two statistics on the remaining 30\%. 
Ties are broken randomly when they occur, and we report results averaged over all runs.

For each fairness constraint, Table~\ref{table:cost-of-fairness} shows that
violent recidivism increases while low risk defendants are detained. 
For example, when we enforce statistical parity, 17\% of detained defendants are relatively low risk. 
An equal number of high-risk defendants are thus released (because we hold fixed the number of individuals detained), 
leading to an estimated 9\% increase in violent recidivism among released defendants.
There are thus tangible costs to satisfying popular notions of algorithmic fairness.

\section{The cost of public safety}
\label{sec:incompatibility}

A decision rule constrained to satisfy statistical parity, conditional statistical parity, or predictive equality  reduces public safety. However, a single-threshold rule that maximizes public safety generally violates all of these fairness definitions. 
For example, in the Broward County data, optimally detaining 30\% of defendants with a single-threshold rule means that
40\% of black defendants are detained, compared to 18\% of white defendants, violating statistical parity. 
And among defendants who ultimately do not go on to commit a violent crime, 
14\% of whites are detained compared to 32\% of blacks, violating predictive equality.

\begin{figure}[t]
\centering
\includegraphics[width=\columnwidth, trim=0 1.2cm 0 0.5cm]{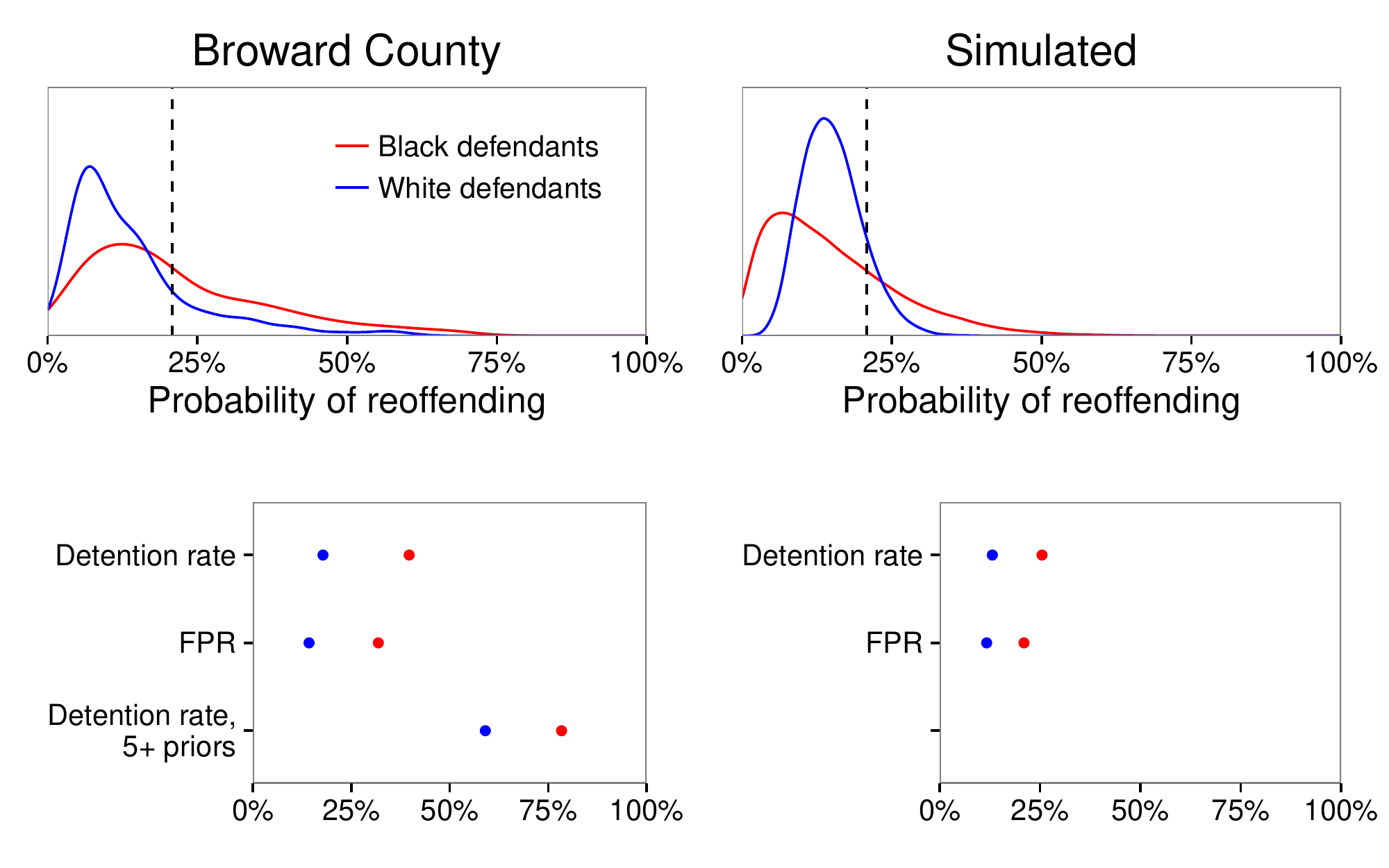}
\vspace{2mm}
\caption{
Top: distribution of risk scores for Broward County data (left), and simulated data drawn from two beta distributions with equal means (right). Bottom: using a single threshold which detains 30\% of defendants in Broward County violates statistical parity (as measured by detention rate), predictive equality (false positive rate), and conditional statistical parity (detention rate conditional on number of prior arrests). We omit the last measure for the simulated data since that would require making additional assumptions about the relationship of priors and risk in the hypothetical populations.
}
\label{fig:risk-scores}
\end{figure}

The reason for these disparities is that white and black defendants in Broward County have different distributions of risk, $p_{Y|X}$, as shown in Figure 1. In particular, a greater fraction of black defendants have relatively high risk scores, in part because black defendants are more likely to have prior arrests, which is a strong indicator of reoffending. Importantly, while an algorithm designer can choose different decision rules based on these risk scores, the algorithm cannot alter the risk scores themselves, which reflect underlying features of the population of Broward County.

Once a decision threshold is specified, these risk distributions determine the statistical properties of the decision rule,
including the group-specific detention and false positive rates.
In theory, it is possible that these distributions line up in a way that achieves statistical parity or predictive equality,
but in practice that is unlikely.
Consequently, any decision rule that guarantees these various fairness criteria are met will in practice deviate from the unconstrained optimum.

\citet{kleinberg2016inherent} establish the incompatibility of different fairness measures when the overall risk $\Pr(Y=1 \mid g(X) = g_i)$ differs between groups $g_i$. However, the tension we identify between maximizing public safety 
and satisfying various notions of algorithmic fairness
typically persists even if groups have the same overall risk. 
To demonstrate this phenomenon, Figure~\ref{fig:risk-scores} shows 
risk score distributions for two hypothetical populations with equal average risk. 
Even though their means are the same, the tail of the red distribution is heavier than the tail of the blue distribution,
resulting in higher detention and false positive rates in the red group.

That a single decision threshold can, and generally does, result in racial disparities is closely related to the 
notion of \emph{infra-marginality} in the econometric literature on taste-based discrimination~\cite{ayres2002outcome,simoiu2017,anwar2006,pierson2017}.
In that work, taste-based discrimination~\cite{becker1957}
is equated with applying decision thresholds that differ by race.
Their setting is human, not algorithmic, decision making, and so one cannot directly observe the thresholds being applied;
the goal is thus to infer the thresholds from observable statistics.
Though intuitively appealing, 
detention rates and false positive rates are poor proxies for the thresholds:
these infra-marginal statistics consider \emph{average} risk above the thresholds, and so can differ even if the thresholds are identical (as shown in Figure~\ref{fig:risk-scores}). 
In the algorithmic setting, past fairness measures notably focus on these infra-marginal statistics, even though the thresholds themselves are directly observable.

\section{Detecting discrimination}
\label{sec:calibration}

The algorithms we have thus far considered output a decision $d(x)$ for each individual.\pagebreak[0]
In practice, however, algorithms like COMPAS typically output a score $s(x)$
that is claimed to indicate a defendant's risk $p_{Y|X}$;\pagebreak[0]
decision makers then use these risk estimates to select an action (e.g., release or detain).

In some cases, neither the procedure nor the data used to generate these scores is disclosed, prompting worry that the scores are themselves discriminatory. 
To address this concern, researchers often examine whether
scores are calibrated~\cite{kleinberg2016inherent}, as defined by Eq.~\eqref{eq:calibration}.\footnote{%
Some researchers also check whether the AUC of scores is
similar across race groups~\cite{skeem2015risk}. 
The theoretical motivation for examining AUC is less clear, 
since the true risk distributions might have different AUCs,
a pattern that would be reproduced in scores that approximate these probabilities.
In practice, however, one might expect the true risk distributions to 
yield similar AUCs across race groups---and indeed this is the case for the Broward County data.
}
Since the true probabilities $p_{Y|X}$ are necessarily calibrated,
it is reasonable to expect risk estimates that approximate these probabilities to be calibrated as well. 
Figure~\ref{fig:calibration} shows that the COMPAS scores indeed satisfy this property.
For example, among defendants who scored a seven on the COMPAS scale, 
60\% of white defendants reoffended, which is nearly identical to the 61\% percent of black defendants who reoffended.

\begin{figure}[t]
\centering
\includegraphics[width=.8\columnwidth, trim = 0 1.1cm 0 0]{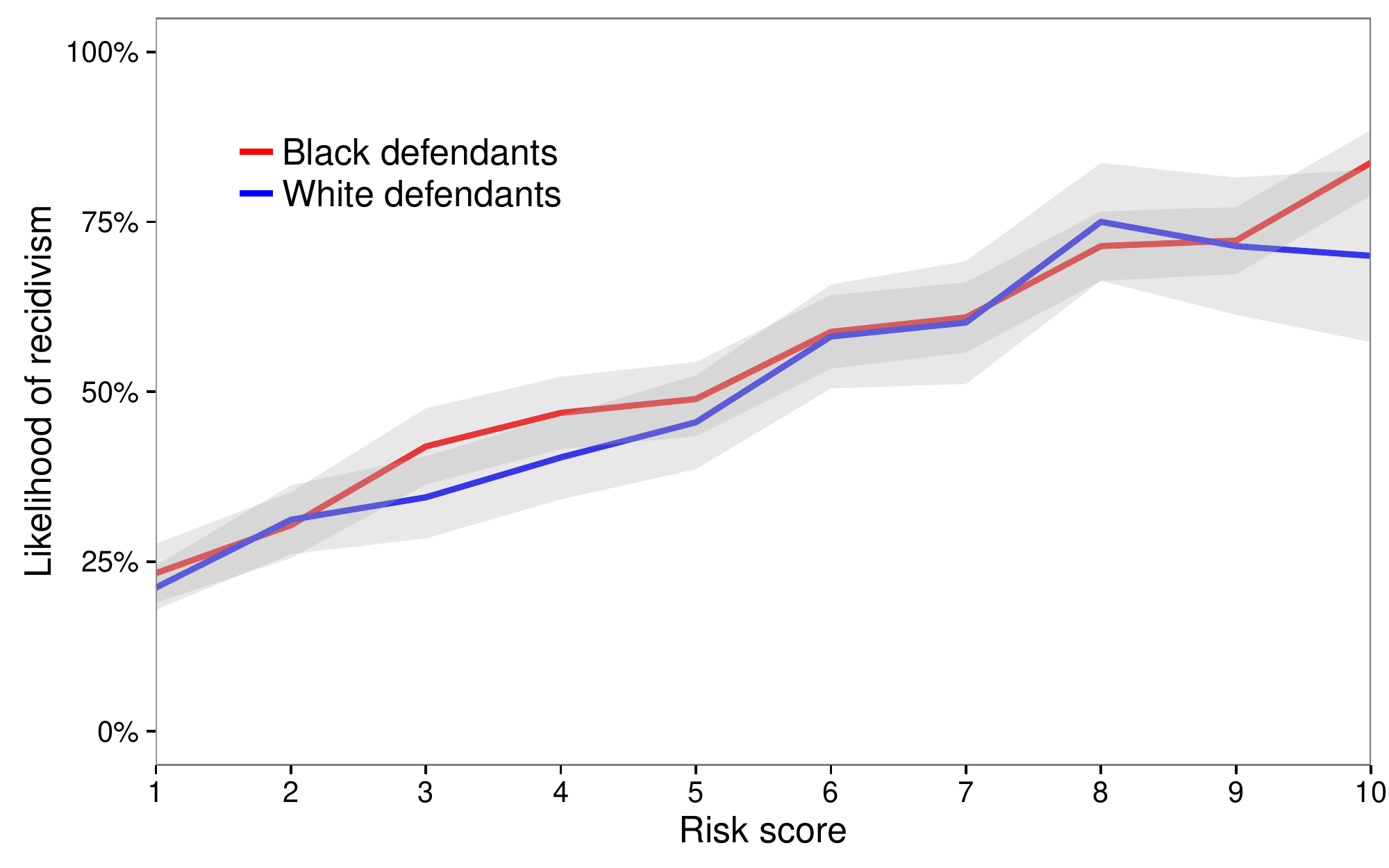}
\vspace{2mm}
\caption{Recidivism rate by COMPAS risk score and race. White and black defendants with the same risk score are roughly equally likely to reoffend, indicating that the scores are calibrated. 
The $y$-axis shows the proportion of defendants re-arrested for any crime, including non-violent offenses; 
the gray bands show 95\% confidence intervals.
}
\label{fig:calibration}
\end{figure}

However, given only scores $s(x)$ and outcomes $y$, it is impossible to determine whether the scores are accurate estimates of $p_{Y|X}$ or have been strategically designed to produce racial disparities. \citet{hardt2016} make a similar observation in their discussion of ``oblivious'' measures.
Consider a hypothetical situation where a malicious decision maker wants to release all white defendants, even if they are high risk. 
To shield himself from claims of discrimination, he applies a facially neutral 30\% threshold to defendants regardless of race. 
Suppose that 20\% of blacks recidivate, and the decision-maker's algorithm uses additional information, such as prior arrests, to partition blacks into three risk categories: low risk (10\% chance of reoffending), average risk (20\% chance), and high risk (40\% chance). 
Further suppose that whites are just as risky as blacks overall (20\% of them reoffend), but the decision maker ignores individual characteristics and labels every white defendant average risk. This algorithm is calibrated, as both whites and blacks labeled average risk reoffend 20\% of the time. 
However, all white defendants fall below the decision threshold, so none are detained. 
By systematically ignoring information that could be used to distinguish between white defendants, 
the decision maker has succeeded in discriminating while using a single threshold applied to calibrated scores.

Figure~\ref{fig:calibration_counterexample} illustrates a general method for constructing such discriminatory scores
from true risk estimates.
We start by adding noise to the true scores (black curve) of the group that we wish to treat favorably---in the figure we use $\text{N}(0, 0.5)$ noise.
We then use the perturbed scores to predict the outcomes $y_i$ via a logistic regression model.
The resulting model predictions (red curve) are more tightly clustered around their mean, 
since adding noise removes information. 
Consequently, under the transformed scores, no one in the group lies above the decision threshold, indicated by the vertical line.
The key point is that the red curve is a perfectly plausible distribution of risk: without further information, 
one cannot determine whether the risk model was fit on 
input data that were truly noisy, 
or whether noise was added to the inputs to produce disparities.

These examples relate to the historical practice of \emph{redlining}, in which lending decisions were intentionally based only on 
coarse information---usually neighborhood---in order to deny loans to well-qualified minorities~\cite{berkovec1994race}.
Since even creditworthy minorities often resided 
in neighborhoods with low average income,
lenders could deny their applications by adhering to a facially neutral policy of not serving low-income areas.
In the case of redlining, one discriminates by
ignoring information about the disfavored group;
in the pretrial setting, one ignores information about the favored group. Both strategies, however, operate under the same general principle.

There is no evidence to suggest that organizations have intentionally ignored relevant information when constructing risk scores.
Similar effects, however, may also arise through negligence or unintentional oversights.
Indeed, we found in Section~\ref{sec:empirical} that we could improve the predictive power of the Broward County COMPAS scores
with a standard statistical model.
To ensure an algorithm is equitable, it is thus
important to inspect the algorithm itself and not just the decisions it produces.

\begin{figure}[t]
\centering
\includegraphics[width=\columnwidth, trim=0 0.7cm 0 0]{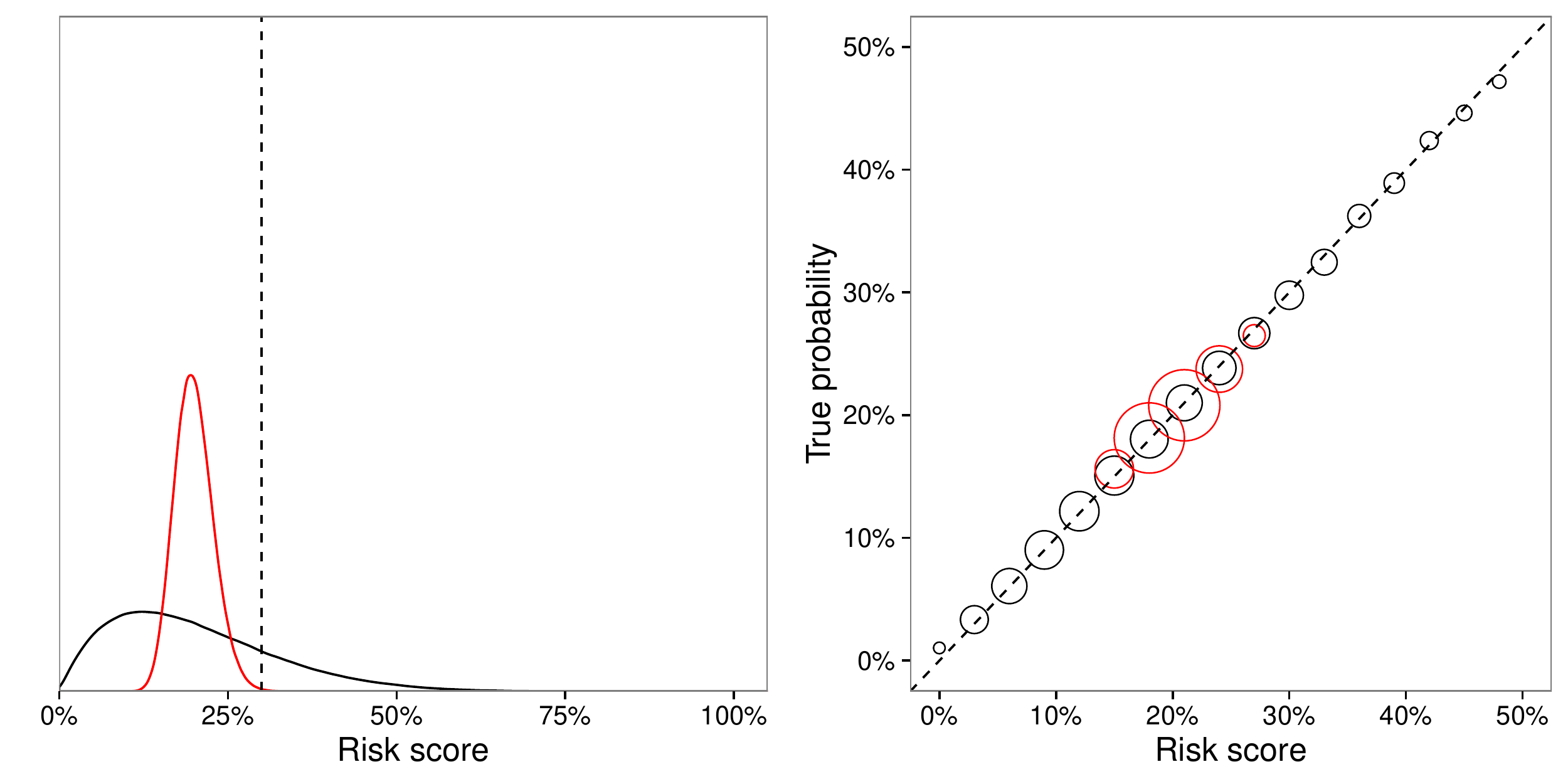}
\caption{Calibration is insufficient to assess discrimination. 
In the left plot, the black line shows the distribution of risk in a hypothetical population,
and the red line shows strategically altered 
risk estimates in the same population. 
Both sets of risk scores are calibrated (right plot), 
but the altered risk scores are less informative and as a result guarantee that no defendants fall above the detention threshold (dashed vertical line).}
\label{fig:calibration_counterexample}
\end{figure}

\section{Discussion}
\label{sec:discussion}
Maximizing public safety requires detaining all individuals deemed sufficiently likely to commit a violent crime, regardless of race.\pagebreak[0]
However, to satisfy common metrics of fairness, one must set multiple, race-specific thresholds.\pagebreak[0]
There is thus an inherent tension between minimizing expected violent crime and satisfying common notions of fairness.\pagebreak[0]
This tension is real:\pagebreak[0]
by analyzing data from Broward County, 
we find that optimizing for public safety yields stark racial disparities;
conversely, satisfying past fairness definitions means 
releasing more high-risk defendants, 
adversely affecting public safety.

Policymakers face a difficult and consequential choice,
and it is ultimately unclear what course of action is best in any given situation.
We note, however, one important consideration:
with race-specific thresholds, a black defendant may be released while an equally risky white defendant is detained. 
Such racial classifications would likely trigger \emph{strict scrutiny}~\cite{fisher2016}, the most stringent standard of judicial review used by U.S. courts under the Equal Protection Clause of the Fourteenth Amendment.
A single-threshold rule thus maximizes public safety while satisfying a core constitutional law rule, bolstering the case in its favor.

To some extent, concerns embodied by past fairness definitions can be addressed while still adopting a single-threshold rule. For example, by collecting more data and accordingly increasing the accuracy of risk estimates, one can lower error rates. 
Further, one could raise the threshold for detaining defendants, reducing the number of people erroneously detained from all race groups. Finally, one could change the decision such that classification errors are less costly. For example, rather than being held in jail, risky defendants might be required to participate in community supervision programs.

When evaluating policy options, it is important to consider how well risk scores capture the salient costs and benefits of the decision. For example, though we might want to minimize violent crime conducted by defendants awaiting trial, we typically only observe crime that results in an arrest. But arrests are an imperfect proxy. Heavier policing in minority neighborhoods might lead to black defendants being arrested more often than whites who commit the same crime~\cite{lum2016predict}. Poor outcome data might thus cause one to systematically underestimate the risk posed by white defendants.
This concern is mitigated when the outcome $y$ is serious crime---rather than minor offenses---since such incidents are less susceptible to biased observation. In particular, \citet{skeem2015risk} note that the racial distribution of individuals arrested for violent offenses is in line with the racial distribution of offenders inferred from victim reports and also in line with self-reported offending data. 

One might similarly worry that the features $x$ are biased in the sense that factors are not equally predictive across race groups, a phenomenon known as \emph{subgroup validity}~\cite{ayres2002outcome}. For example, housing stability might be less predictive of recidivism for minorities than for whites. If the vector of features $x$ includes race, an individual's risk score $p_{y|x}$ is in theory statistically robust to this issue, and for this reason some have argued race should be included in risk models~\cite{berk_2009}. However, explicitly including race as an input feature raises legal and policy complications, and as such it is common to simply exclude features with differential predictive power~\cite{danner2015}. While perhaps a reasonable strategy in practice, we note that discarding information may inadvertently lead to the redlining effects we discuss in Section~\ref{sec:calibration}.

Risk scores might also fail to accurately capture costs in specific, idiosyncratic cases. Detaining a defendant who is the sole caretaker of her children arguably incurs higher social costs than detaining a defendant without children. Discretionary consideration of individual cases might thus be justified, provided that such discretion does not also introduce bias.
Further, the immediate utility of a decision rule might be a poor measure of its long-term costs and benefits.
For example, in the context of credit extensions, offering loans preferentially to minorities might ultimately lead to a more productive distribution of wealth, combating harms from historical under-investment in minority communities. 

Finally, we note that some decisions are better thought of as group rather than individual choices, limiting the applicability of the framework we have been considering. For example, when universities admit students, they often aim to select the best group, not simply the best individual candidates, and may thus decide to deviate from a single-threshold rule in order to create diverse communities with varied perspectives and backgrounds \cite{page2008difference}.

Experts increasingly rely on algorithmic decision aids in diverse settings, including law enforcement, education, employment, and medicine~\cite{barocas2016big,berk2012criminal,goel2016-risky,goel2016precinct,jung2017}.
Algorithms have the potential to improve the efficiency and equity of decisions,
but their design and application 
raise complex questions for researchers and policymakers.
By clarifying the implications of competing notions of algorithmic fairness, we hope our analysis 
fosters discussion and informs policy.

\begin{acks}
We thank Imanol Arrieta Ibarra, Ang{\`e}le Christin, 
Alexander Gelber, Andrew Gelman, 
Jan Overgoor, Ravi Shroff, and Jennifer Skeem 
for their helpful comments.
This work was supported in part by the John S. and James L. Knight Foundation,
and by the Hellman Fellows Fund.
Data and code to reproduce our results are available at \url{https://github.com/5harad/cost-of-fairness}.
\end{acks}

\balance
\bibliographystyle{ACM-Reference-Format}
\bibliography{algobias}

\end{document}